\titleformat{\section}{\large\bfseries\boldmath\linespread{1.25}\selectfont}{\thesection.}{0.75em}{#1}
\titleformat{\subsection}{\bfseries\boldmath\linespread{0.25}\selectfont}{\thesubsection.}{0.75em}{#1}
\titleformat{\appendix}{\large\bfseries\linespread{1.25}\selectfont}{Appendix\theappendix.}{0.75em}{#1}
\newtheoremstyle{mystyle}{20pt}{20pt}{\itshape}{0cm}{\bfseries}{.}{0.75em}{}
\theoremstyle{mystyle}
\newtheorem{thm}{Theorem}
\newtheorem{corollary}{Corollary}
\newtheoremstyle{mystyle02}{20pt}{20pt}{}{0cm}{\bfseries}{.}{0.75em}{}
\theoremstyle{mystyle02}
\newtheorem{remark}{Remark}
\makeatletter \@addtoreset{equation}{section} \makeatother
\renewcommand\footnoterule{\kern20pt{\hrule width0.75in height0.5pt}\kern13pt}
\begin{document}
	
	\title{Computing Gerber-Shiu function in the classical risk model with 
	interest 
	using collocation method}
	\author{Zan Yu, ~\quad~Lianzeng Zhang\thanks{Corresponding author.}\\[1em]
		{\small School of Finance, Nankai University, Tianjin 300350,  China}}
	\date{\vspace*{-3em}}
	\maketitle

    \def\thefootnote{}
	\footnote{E-mail addresses: zhlz@nankai.edu.cn(L. 
		Zhang), yz3006@163.com(Z. Yu)}

	\begin{center}\begin{spacing}{1.05}
			\begin{minipage}[t]{15cm}
				\abstract{The Gerber-Shiu function is a classical 
				research topic in actuarial science. However, exact 
				solutions are only available in the literature for very 
				specific cases where the claim amounts follow distributions 
				such as the exponential distribution. This presents a 
				longstanding challenge, particularly from a computational 
				perspective. For the classical risk process in 
				continuous time, the Gerber-Shiu discounted penalty function 
				satisfies a class of Volterra integral equations. In this 
				paper, we use the collocation method to compute the 
				Gerber-Shiu function for risk model with interest. Our 
				methodology demonstrates that the function can be expressed as 
				a linear algebraic system, which is straightforward to 
				implement. One major advantage of our approach is that it does 
				not require any specific distributional assumptions on the 
				claim amounts, except for mild differentiability and 
				continuity conditions that can be easily verified. We also 
				examine the convergence orders of the collocation method. 
				Finally, we present several numerical examples to illustrate 
				the desirable performance of our proposed method.}\\[7pt]
				\textbf{Keywords} \quad  Gerber-Shiu 
				function, 
				\; Volterra integral 
				equations, \; Collocation method,  \; Convergence orders 
			\end{minipage}
	\end{spacing}\end{center}
	
	\enlargethispage{-1.75em}
	\thispagestyle{empty}

\section{Introduction} 
Consider the classical risk (surplus) process  in continuous time 
$\{U(t)\}_{t \geq 0}$ 
as below
\begin{equation}
	U(t)=u+c t-\sum_{i=1}^{N(t)} X_{i}, \quad t \geq 0,
	\label{0.1}
\end{equation}
where $u \geq 0$ is the initial reserve, $N(t)$ is the number of claims up 
to time $t$ which follows a homogeneous Poisson process of parameter $\lambda 
>0$. The aggregate claim amount up to time $t$ is 
$$
Z(t)=\sum_{i=1}^{N(t)} X_{i},
$$ 
where the claim sizes  $\left\{X_{i},i=1,2,\ldots\right\}$ are  positive, 
independent and identically distributed random variables, with
the distribution function $F(x)$ and finite mean $\mu$. The 
premium rate $c$ satisfies
\begin{equation}
	c=\lambda \mu(1+\theta),
\end{equation}
where $\theta > 0$ is the premium loading factor. As usual, we assume that 
$X_i$  
and $\{N(t)\}_{t > 0}$ are 
independent.

The classical surplus process defined in \eqref{0.1} does not take into 
account any interest earnings on investment. Assume that the insurer receives 
interest on its surplus at a constant force 
$\delta$ per unit time. Then
the modified surplus process, namely $\{U_{\delta}(t)\}_{t \geq 0}$ , can be 
described by 
\begin{equation}
	U_\delta(t)=u \mathrm{e}^{\delta t}+c \sx*{\angl{t}}[(\delta)] -\int_0^t 
	\mathrm{e}^{\delta(t-x)} \mathrm{d} Z(x).
\end{equation}

The time of ruin is the first time that the risk process takes a negative 
value and is denoted by 
$$
\tau=\inf \{t \geq 0\mid U_{\delta}(t)<0\},
$$
where $\tau=\infty$ if $U_\delta(t) \geq 0$ for all $t \geq 0$. To study the 
time to ruin $\tau$, the surplus immediately 
before ruin $U(\tau-)$, and the deficit
at ruin $\left|U(\tau)\right|$ in the classial surplus process, 
\cite{Gerber1998} 
proposed an expected discounted
penalty function. In this 
paper, we study the modified
Gerber-Shiu discounted penalty 
function defined by
\begin{equation}
	\Phi_{\delta,\alpha}(u)=\mathbb{E}\left[e^{-\alpha \tau} 
	w\left(U_{\delta}(\tau-),\left|U_{\delta}(\tau)\right|\right) 
	\mathbf{1}(\tau<\infty) 
	\mid U_{\delta}(0)=u\right], \quad u \geq 0, \alpha \geq 0,
\end{equation}
where $\mathbf{1}(A)$ is the indicator function of event $A$, and $w:[0, 
\infty) 
\times[0, \infty) \longmapsto[0, \infty)$ is a measurable
penalty function. Here, we can interpret $e^{-\alpha \tau}$ as the 
`discounting factor'.

The Gerber-Shiu function has been a popular tool for actuarial researchers due 
to its broad applicability in representing a range of ruin-related quantities 
since it was introduced. In the past two decades, various stochastic processes 
have been employed to model the temporal evolution of surplus process using 
the Gerber-Shiu function. For instance, \cite{Gerber1998} studied the expected 
value of a discounted penalty function 
under the classical risk model, also known as the Cramér-Lundberg model. Over 
time, the Cram{\'e}r-Lundberg model has been extended in multiple directions 
in order to describe more accurately the stylized features of the surplus 
process in 
the real world. For example, the Sparre-Andersen model 
(\citealt{LANDRIAULT2008600}, \citealt{CHEUNG2010117}), the L{\'e}vy risk 
model 
(\citealt{doi:10.1080/10920277.2006.10597421}, 
\citealt{doi:10.1080/03461238.2011.627747}), and the Markov additive processes 
(\citealt{li_lu_2008}, \citealt{doi:10.1080/10920277.2010.10597599}) have all 
been 
studied for this purpose. 

A comprehensive review of Gerber-Shiu function and its variants, risk 
surplus models, and additional structural features, along with a wide range of 
analytical, semi-analytical, and asymptotic methods has been provided by 
\cite{HE20231}. While most of the existing literature has focused on exploring 
explicit solutions for Gerber-Shiu functions, this approach has limitations as 
it heavily depends on assumptions about the underlying claim size 
distribution. 
Only under a few types of
distributions, the Gerber-Shiu function has explicit expressions, such as 
exponentials, 
combinations of exponentials, Erlangs, and their mixtures.  Therefore, 
developing numerical methods for computing the 
Gerber-Shiu function is of great importance. 
\cite{mnatsakanov2008nonparametric} derived the Laplace transform of the 
survival probability, thereafter the survival probability can be subsequently 
obtained 
through some numerical inversion methods. This method was also applied by 
\cite{shimizu2011estimation, shimizu2012non} to compute the Gerber-Shiu 
discounted penalty function in the L{\'e}vy risk model and the perturbed 
compound Poisson risk model, respectively. The Gerber-Shiu function can also 
be evaluated approximately by truncating an infinite Fourier series 
(see e.g. \citealt{CHAU2015170}, \citealt{zhang_2017, zhang2017estimating}). 
In addition, 
\cite{zhang2018new, zhang2019estimating} develop
the Gerber-Shiu function on the Laguerre basis, and then compute the
unknown coefficients based on sample information on claim numbers and
individual claim sizes. \cite{wang2019computing} apply the frame duality 
projection method to compute the Gerber-Shiu function.

In this paper, we shall compute the Gerber-Shiu function in a risk 
model under interest force from a new perspective that is easy to implement. 
We remark that some results have been obtained by e.g. \cite{CAI2002389}, 
\cite{wu2007gerber} under interest force, but the results therein are mostly 
concerned with structural properties and general integral results, but no 
specific computing methods are given. Here we start directly from the solution 
of the integral equation to compute the Gerber-Shiu function. 

The rest of this 
paper is organized as 
follows. We first review some basic 
conclusions of the Gerber-Shiu function in Section \ref{Preliminaries on 
	Gerber–Shiu function}. 
Section \ref{section:Collocation methods for linear 
	second-kind VIEs} introduces the collocation method. In Section 
\ref{section:Convergence of collocation methods}, we discuss the convergence 
order 
of the collocation method. We present 
some numerical examples to illustrate the effectiveness of collocation method 
in Section \ref{section:Numerical results}. Some conclusions are shown in 
Section \ref{section:Concluding Comments}.

\section{Preliminaries on Gerber-Shiu function}
\label{Preliminaries on Gerber–Shiu function}
In this section, we present some necessary preliminaries on Gerber-Shiu 
function. Some of the results are borrowed from \cite{CAI2002389}. 
Throughout this paper, we condition on the time $t$, and on the amount 
of the first claim $x$. Thus,
\begin{equation}
	\begin{aligned}
		\Phi_{\delta,\alpha}(u)=&\int_{0}^{\infty}\lambda e^{-\lambda 
			t}\int_{0}^{\infty}\mathbb{E}\left[e^{-\alpha \tau} 
		w\left(U(\tau-),\left|U(\tau)\right|\right) \mathbf{1}(\tau<\infty) 
		\mid U_{\delta}(0)=u \right] \mathrm{d} F(x) \mathrm{d} t\\
		=&\int_0^{\infty} \lambda \mathrm{e}^{-(\lambda+\alpha) t} \int_0^{u 
			\mathrm{e}^{\delta t}+c \sx*{\angl{t}}[(\delta)]} \Phi_{\delta, 
			\alpha}\left(u \mathrm{e}^{\delta t}+c 
		\sx*{\angl{t}}[(\delta)]-x\right) \mathrm{d} F(x) \mathrm{d} t \\
		& +\int_0^{\infty} \lambda \mathrm{e}^{-(\lambda+\alpha) t} 
		\int_{u 
			\mathrm{e}^{\delta t}+c 
			\sx*{\angl{t}}[(\delta)]}^{\infty} 
		w\left(u \mathrm{e}^{\delta t}+c 
		\sx*{\angl{t}}[(\delta)], x-u \mathrm{e}^{\delta t}-c 
		\sx*{\angl{t}}[(\delta)]\right) \mathrm{d} 
		F(x) \mathrm{d} t\\
		=&\lambda(\delta u+c)^{(\lambda+\alpha) / \delta} 
		\int_u^{\infty}(\delta y+c)^{-((\lambda+\alpha) / 
			\delta)-1}\left(\int_0^y \Phi_{\delta, \alpha}(y-x) \mathrm{d} 
		F(x)+A(y)\right) \mathrm{d} y,
	\end{aligned}
	\label{1.1}
\end{equation}
where
\begin{equation*}
	A(t)=\int_{t}^{\infty}w(t,s-t) \mathrm{d} F(s).
\end{equation*}

Differentiating \eqref{1.1} with respect to $u$, we get
\begin{equation}
	\frac{\mathrm{d}}{\mathrm{d} u} \Phi_{\delta, 
		\alpha}(u)=\frac{\lambda+\alpha}{c+\delta u} \Phi_{\delta, 
		\alpha}(u)-\frac{\lambda}{c+\delta u}\left(\int_0^u \Phi_{\delta, 
		\alpha}(u-x) \mathrm{d} F(x)+A(u)\right).
	\label{1.2}
\end{equation}

Thus, integrating \eqref{1.2}, then performing integration by parts, we get,
$$
\Phi_{\delta, \alpha}(u)=\frac{c \Phi_{\delta, \alpha}(0)}{c+\delta 
	u}-\frac{\lambda}{c+\delta u} \int_0^u A(t) \mathrm{d} t+\int_0^u 
	K_{\delta, 
	\alpha}(u, t) \Phi_{\delta, \alpha}(t) \mathrm{d} t,
$$
where
$$
K_{\delta, \alpha}(u, t)=\frac{\delta+\alpha+\lambda (1-F(u-t))}{c+\delta u}.
$$

In particular, let $\alpha=0$, and denote that 
$\Phi_\delta(u)=\Phi_{\delta, 
	0}(u)$, $\Phi_\delta$ can be simplified to a Volterra integral equation 
of the second-kind:
\begin{equation}
	\Phi_\delta(u)=g(u)+\int_0^u K_\delta(u, t) \Phi_\delta(t) 
	\mathrm{d} t,
	\label{1.3}
\end{equation}
where 
\begin{equation}
	g(u)=\frac{c \Phi_\delta(0)}{c+\delta u}-\frac{\lambda}{c+\delta 
		u} \int_0^u A(t) \mathrm{d} t
\end{equation}


Therefore, for \eqref{1.3}, if $\Phi_{\delta}(0)$ is given, we can use 
numerical method to approximate $\Phi_{\delta}(u)$. Fortunately, 
\cite{CAI2002389} obtained $\Phi_{\delta}(0)$ by using Laplace 
transforms.  Let 
\begin{equation*}
	m_A=\int_{0}^{\infty}A(t)\mathrm{d} t
\end{equation*}
and
\begin{equation*}
	\phi_1(s)=\frac{1}{\mu}\int_{0}^{\infty}e^{-sx}\bar{F}(x)\mathrm{d} x.
\end{equation*}

Then \begin{equation}
	\begin{aligned}
		\Phi_\delta(0) =\frac{\lambda m_A}{\kappa_\delta} \int_{0}^{\infty} 
		\beta(\delta z) \exp \left(-c z+\lambda \mu \int_0^z \phi_1(\delta s) 
		\mathrm{d} s\right) \mathrm{d} z
	\end{aligned}
	\label{1.5}
\end{equation}
where 
$$
\kappa_\delta=c \int_{0}^{\infty} \exp \left(-c z+\lambda \mu \int_0^z 
\phi_1(\delta s) \mathrm{d} s\right) \mathrm{d} z
$$
and
$$
\beta(s)=\frac{1}{m_A}\int_{0}^{\infty}e^{-sx}A(x) \mathrm{d} x.
$$

As a special example, let $w(x_1,x_2)=1$, $\Phi_\delta(u)$ is expressed as the 
ruin probability for 
the surplus process \eqref{0.1}. $A(t)=1-F(t)$, $m_A=\mu$, 
$\beta(s)=\phi_1(s)$,
\eqref{1.5} can be simplified to 
$$
\Phi_{\delta}(0)=\frac{\kappa_{\delta}-1}{\kappa_{\delta}}.
$$
which is equivalent to Eq. (14) of \cite{SUNDT19957}.


\begin{remark}
	It is well known (see, for example, \citealt{brunner_2004}) that if $g$ 
	and 
	the kernel $K_{\delta}$ are both continuous, then the second-kind Volterra 
	integral 
	equation \eqref{1.3} has a unique solution.
	\label{remark.1}
\end{remark}

\section{The computation process of collocation method }
\label{section:Collocation methods for linear second-kind VIEs}

In this section, we first introduce the collocation method for calucating the 
general Volterra integral equation. Then we study how to use the collocation 
method to 
compute the Gerber-Shiu function.

Consider the general linear Volterra integral equation (VIE) of the 
second-kind as below
\begin{equation}
	y(t)=g(t)+\int_{0}^{t} K(t, s)y(s) \mathrm{d} s, \quad t \in I:=[0, T].
	\label{2.0}
\end{equation}

On the RHS of \eqref{2.0}, the linear Volterra integral operator $\mathcal{V}: 
C(I) \rightarrow C(I)$ is 
defined by
\begin{equation}
	(\mathcal{V} y)(t):=\int_{0}^{t} K(t, s)y(s) \mathrm{d} s, \quad t \in I,
	\label{2.1}
\end{equation}
where $K \in C(D)$ is some given function defined on $D$, here $D:=\{(t, s): 0 
\leq s \leq t \leq T\}$. 
Let $g \in C(I)$ be a given function. 
Use the notion of \eqref{2.1}, the Volterra integral equation can be written by
\begin{equation}
	y(t)=g(t)+(\mathcal{V} y)(t), \quad t \in I. \label{2.2}
\end{equation}

Assume the solution of \eqref{2.2} can be approximated by collocation in 
the (continuous) piecewise polynomial space
$$
S_{m-1}\left(I_{h}\right):=\left\{v:\left.v\right|_{\sigma_{n}} \in \pi_{m-1}, 
\quad 0 \leq n \leq N-1 \right\},
$$ 
where 
$$
I_{h}:=\left\{t_{i}=t_{i}^{(N)}: i =0,1,\ldots , N \right\}
$$
denotes a grid on the given interval  $I:=[0, T]$, with $t_0 =0$ and $t_N =T$. 
Here, $\pi_{m-1}$ is the 
set of (real) polynomials of degree of $m - 1$ (with $m> 1$), and  
set $\sigma_{0}:=\left[t_{0}, t_{1}\right]$ and 
$\sigma_{n}:=\left(t_{n}, t_{n+1}\right]$ for $n=1, \ldots , N-1$. 

The quantity $h_n:=t_{n+1}-t_{n}$ is called the diameter of the grid 
$I_{h}$. In particular, for the convenience of calculation, we use the uniform 
grid 
$I_{h}$, which means
$$
h_{n} \equiv h = \frac{T}{N}
$$ 
for all $0 \leq n \leq N-1$.

Define the set of collocation points,
\begin{equation}
	X_{h}:=\left\{t_{n, i}=t_{n}+c_{i} h:  n=0,1, \ldots, N-1, \; i = 
	1,2,\ldots, m 
	\right\}
\end{equation}
which is determined by the given grid $I_{h}$ and the given collocation 
parameters 
$\left\{c_i \right\}\subset [0,1]$ such that $0\leq c_1 \leq \cdots \leq c_m 
\leq 1$, the collocation solution $u_{h} \in S_{m-1}\left(I_{h}\right)$ is 
defined by the collocation equation corresponding to \eqref{2.2},
\begin{equation}
	u_{h}(t)=g(t)+\left(\mathcal{V} u_{h}\right)(t)
	, \quad t \in X_{h}. 
	\label{2.3}
\end{equation}

Using the local Lagrange basis functions with set $\left\{c_i \right\}$,
$$
L_{i}(\theta):=\prod_{k \neq i}^{m} \frac{\theta-c_{k}}{c_{i} - c_{k}}, \quad 
\theta \in [0,1]
$$
and
$$
U_{n, i}:=u_{h}(t_{n, i})=u_{h}\left(t_{n}+c_{i} h \right), \quad i=1, \ldots, 
m.
$$ 

For any $t=t_{n}+\theta h$ on the subinterval $\sigma_{n}:= (t_{n}, 
t_{n+1}] $, the local represention $u_{h}$  can be written as a interpolation 
function:
\begin{equation}
	u_{h}(t)=u_{h}\left(t_{n}+\theta h \right)=\sum_{i=1}^{m} L_{i}(\theta) 
	U_{n, i}, 
	\quad \theta \in(0,1].
	\label{2.5}
\end{equation}

Thus, the collocation equation \eqref{2.3} 
assumes the form 
\begin{equation*}
	\begin{aligned}
		U_{n, i}&=g(t_{n, i})+\int_{0}^{t_{n,i}} K(t_{n, i}, s) 
		u_{h}(s) \mathrm{d} s
		\\
		&=g(t_{n, i})+\int_{0}^{t_{n}} K(t_{n, i}, s) u_{h}(s) \mathrm{d} 
		s+\int_{t_{n}}^{t_{n,i}} K(t_{n, i}, s) u_{h}(s) \mathrm{d} s
		\\
		&=g(t_{n, i})+\int_{0}^{t_{n}} K(t_{n, i}, s) u_{h}(s) \mathrm{d} s+ h 
		\int_{0}^{c_{i}} K\left(t_{n, i}, t_{n}+s h \right) u_{h}\left(t_{n}+s 
		h \right) \mathrm{d} s
	\end{aligned}
\end{equation*}

Consider 
\begin{equation}
	F_{n}(t):=\int_{0}^{t_{n}} K(t, s) u_{h}(s) \mathrm{d} 
	s=\sum_{l=0}^{n-1} 
	h  \int_{0}^{1} K\left(t, t_{l}+s h \right) 
	u_{h}\left(t_{l}+s h \right) \mathrm{d} s
	\label{2.7}
\end{equation}

If we set $t=t_{n, i}$ in
\eqref{2.7} and employ the local 
representation \eqref{2.5}, we may write 
$$
\begin{aligned}
	F_{n}\left(t_{n, i}\right) &=\sum_{l=0}^{n-1} h \int_{0}^{1} 
	K\left(t_{n, i}, t_{l}+s h \right) u_{h}\left(t_{l}+s h \right) \mathrm{d} 
	s 
	\\
	&=\sum_{l=0}^{n-1} h \sum_{j=1}^{m}\left(\int_{0}^{1} K\left(t_{n, i}, 
	t_{l}+s h \right) L_{j}(s) d s\right) U_{l, j}
\end{aligned}
$$

So on the subinterval $ (t_{n}, t_{n+1}] $, the collocation 
equation \eqref{2.3} can be rewritten as 
\begin{equation}
	\begin{aligned}
		U_{n, i}=&g\left(t_{n, i}\right)+\sum_{l=0}^{n-1} h 
		\sum_{j=1}^{m}\left(\int_{0}^{1} K\left(t_{n, i}, 
		t_{l}+s h \right) L_{j}(s) \mathrm{d} s\right) U_{l, j}\\
		 &\quad+h 
		\sum_{j=1}^{m}\left(\int_{0}^{c_{i}} K\left(t_{n, i}, t_{n}+s h 
		\right) 
		L_{j}(s) \mathrm{d} s\right) U_{n, j}\\
	\end{aligned}
\end{equation}

Let $\mathbf{U}_{n}:=\left(U_{n, 1}, \ldots, U_{n, m}\right)^{T}$ , 
$\mathbf{g}_{n}:=\left(g\left(t_{n, 1}\right), \ldots, g\left(t_{n, 
	m}\right)\right)^{T}$, and define the matrices in 
$L\left(\mathbb{R}^{m}\right)$,
\begin{equation}
	B_{n}^{(l)}:=\left(\begin{array}{c}
		\int_{0}^{1} K\left(t_{n, i}, t_{l}+s h \right) L_{j}(s) \mathrm{d} s 
		\\
	\end{array}\right)_{m\times m}, \quad 
	0 \leq l \leq n-1, \quad i, j=1,	\ldots, m 
	\label{2.8}
\end{equation}
and
\begin{equation}
	B_{n}:=\left(\begin{array}{c}
		\int_{0}^{c_{i}} K\left(t_{n, i}, t_{n}+s h \right) L_{j}(s) 
		\mathrm{d} s \\
	\end{array}\right)_{m\times m}, \quad i, j=1, \ldots, m 
	\label{2.9}
\end{equation}

The linear algebraic system for  $\mathbf{U}_{n} \in \mathbb{R}^{m}$ can be 
written compactly as
\begin{equation}
	\left[\mathcal{I}_{m}-h B_{n}\right] 
	\mathbf{U}_{n}=\mathbf{g}_{n}+\mathbf{G}_{n}, 
	\quad n=0,1, \ldots, N-1
	\label{2.11}
\end{equation}
with
$$
\mathbf{G}_{n}:=\left(F_{n}\left(t_{n, 1}\right), \ldots, F_{n}\left(t_{n, 
	m}\right)\right)^{T}=\sum_{l=0}^{n-1} h B_{n}^{(l)} \mathbf{U}_{l}
$$
Here, $\mathcal{I}_{m}$ denotes  the identity matrix in 
$L\left(\mathbb{R}^{m}\right)$.

Since the kernel $K$ of the Volterra operator $ \mathcal{V}$ is continuous on 
$D$, the elements of the matrices $B_{n}$ are all bounded. By Neumann Lemma 
(\citealt{ortega1990numerical}), the inverse of matrix  
$\mathcal{I}_{m}-h B_{n}$ exists whenever $h \left\|B_{n}\right\|<1$ for 
matrix norm. This clearly holds whenever $h$ is sufficiently small. So for any 
grid $I_{h}$ with grid diameter $h$, each of the linear algebraic 
systems \eqref{2.11} has a unique solution $\mathbf{U}_{n}$. Hence the 
collocation equation \eqref{2.3} defines a unique collocation solution $u_{h} 
\in S_{m-1}\left(I_{h}\right)$ for \eqref{2.0}, with local represention on 
$\sigma_{n}$ given by \eqref{2.5}.

Actually, the collocation solution $u_{h}$ to the Volterra integral equation 
on an interval $I$ is an element of some finite-dimensional function space 
(the collocation space) that satisfies the Volterra integral equation on an 
appropriate finite 
subset of points in $I$ (the set of collocation points).

We suppose that the distribution function of claim size, $F(x)$, is 
continuous. Obviously, $K_\delta$ and $g$ satisfies the assumption of 
Remark \ref{remark.1}. So, substituting $K_\delta$  in the linear algebraic 
systems, we can write $B_{n}^{(l)}$ and $B_{n}$ as follow:
$$
B_{n}^{(l)}=\left(\begin{matrix}
	\int_{0}^{1} \frac{\delta+\bar{F}(t_n+c_1 h-t_l-s h)}{c+\delta(t_n+c_1 h)} 
	 L_{1}(s) \mathrm{d} s &
	\cdots &\int_{0}^{1} \frac{\delta+\bar{F}(t_n+c_1 h-t_l-s 
	h)}{c+\delta(t_n+c_1 h)} L_{m}(s) 
	\mathrm{d} 
	s\\
\vdots & &\vdots
	\\
	\int_{0}^{1} \frac{\delta+\bar{F}(t_n+c_m h-t_l-s h)}{c+\delta(t_n+c_m h)} 
	L_{1}(s) \mathrm{d} s &
	\cdots & \int_{0}^{1} \frac{\delta+\bar{F}(t_n+c_m h-t_l-s 
		h)}{c+\delta(t_n+c_m h)} L_{m}(s) 
	\mathrm{d} 
	s
	\\
\end{matrix}\right)_{m\times m}, 
$$
and
$$
B_{n}=\left(\begin{matrix}
	\int_{0}^{c_1} \frac{\delta+\bar{F}(t_n+c_1 h-t_l-s h)}{c+\delta(t_n+c_1 
	h)} 
	L_{1}(s) \mathrm{d} s &
	\cdots &\int_{0}^{c_1} \frac{\delta+\bar{F}(t_n+c_1 h-t_l-s 
		h)}{c+\delta(t_n+c_1 h)} L_{m}(s) 
	\mathrm{d} 
	s\\
	\vdots & &\vdots
	\\
	\int_{0}^{c_m} \frac{\delta+\bar{F}(t_n+c_m h-t_l-s h)}{c+\delta(t_n+c_m 
	h)} 
	L_{1}(s) \mathrm{d} s &
	\cdots & \int_{0}^{c_m} \frac{\delta+\bar{F}(t_n+c_m h-t_l-s 
		h)}{c+\delta(t_n+c_m h)} L_{m}(s) 
	\mathrm{d} 
	s
	\\
\end{matrix}\right)_{m\times m}. 
$$

We can solve the corresponding linear algebraic system to approximate the 
Gerber-Shiu function. It follows from the formula \eqref{2.8} and \eqref{2.9} 
that we have to 
compute integrals. However, for most of interesting distribution functions, 
the integrals can not be computed explicitly. This may cause some trouble in 
our 
calculations, but the numerical examples in section \ref{section:Numerical 
results} show that even if numerical integration is used, collocation method 
still has quite good accuracy.

\section{Convergence orders of collocation method} 
\label{section:Convergence of collocation methods}
In this section, starting from the error of the interpolation function, we 
focus on the convergence of Equation \eqref{2.3} when the kernel function is 
smooth. The results show that the convergence is closely related to the number 
of collocation parameters.

For any continuous function $y$, the error between $y$ and the Langrange 
interpolation polynomial with the given points $\left\{x_j \right\}$  is 
defined by
$$
e_{m}(y ; t):=y(t)-\sum_{j=1}^{m} L_{j}(t) y\left(x_{j}\right), \quad t 
\in[a, b]
$$ 
where $\left\{x_j \right\}\subset [a,b]$ such that $a\leq x_1 \leq \cdots \leq 
x_m \leq b$. \\

Here, we briefly review an important special case of the celebrated Peano 
Kernel Theorem.
\begin{thm}
	Assume that $y \in C^{m}[a, b]$. Then for given knots $a\leq x_1 < \cdots 
	< x_m \leq b$, $e_{m}(y ; t)$ 
	satisfies the integral representation
	\begin{equation}
		e_{m}(y ; t)=\int_{a}^{b} K_{m}(t, s) y^{(m)}(s) \mathrm{d} s, \quad t 
		\in[a, 
		b],
	\end{equation}
	where the Peano kernel $K_m$ is given by
	$$
	K_{m}(t, s):=\frac{1}{(m-1) !}\left\{(t-s)_{+}^{m-1}-\sum_{k=1}^{m} 
	L_{k}(t)\left(x_{k}-s\right)_{+}^{m-1}\right\}.
	$$ 
	and $y^{(m)}$ denotes the $m$th derivative.
	\label{Peano's Theorem}
\end{thm}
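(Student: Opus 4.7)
The plan is to combine Taylor's theorem with integral remainder with the basic fact that Lagrange interpolation at $m$ nodes reproduces polynomials of degree at most $m-1$ exactly. The underlying idea is that any $y \in C^{m}[a,b]$ can be decomposed as a polynomial of degree less than $m$ plus an integral against $y^{(m)}$; the interpolation error functional then kills the polynomial part, leaving only an integral whose kernel can be read off directly.

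First, I would write Taylor's expansion of $y$ around $a$ with integral remainder, rewritten using the truncated power $(t-s)_+^{m-1}$ so that the upper limit of integration becomes $b$:
$$y(t) = p_{m-1}(t) + \frac{1}{(m-1)!}\int_{a}^{b} (t-s)_+^{m-1}\, y^{(m)}(s)\, \mathrm{d} s,$$
where $p_{m-1}\in\pi_{m-1}$ is the degree-$(m-1)$ Taylor polynomial. Second, for fixed $t\in[a,b]$ I would regard the error as a linear functional on $C[a,b]$,
$$E_{t}[y] := y(t) - \sum_{j=1}^{m} L_{j}(t)\, y(x_{j}) = e_{m}(y;t),$$
and note that $E_{t}$ annihilates $\pi_{m-1}$ because Lagrange interpolation at the $m$ distinct nodes $x_1<\cdots<x_m$ is exact on polynomials of degree at most $m-1$. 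Third, I would apply $E_{t}$ to both sides of the Taylor identity; the polynomial part drops out, and I would interchange $E_{t}$ with the $s$-integration in the remainder term. What emerges inside the integral is
$$\frac{1}{(m-1)!}\Bigl\{(t-s)_+^{m-1} - \sum_{k=1}^{m} L_{k}(t)\,(x_{k}-s)_+^{m-1}\Bigr\},$$
which is exactly $K_{m}(t,s)$, so the claimed representation follows.

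The step that deserves a moment of care is the swap of $E_{t}$ with the integral over $s$. Since $E_{t}$ depends on its argument only through the $m+1$ point evaluations at $t, x_{1},\ldots, x_{m}$, this swap is simply Fubini applied to an iterated integral with a continuous integrand — no analytic subtlety beyond the continuity of $y^{(m)}$. The only other mildly finicky item is handling the truncated-power bookkeeping when some $x_k$ is less than $s$, but $(x_k-s)_+^{m-1}$ is defined precisely to make this routine. Consequently, there is no genuine obstacle in the proof; the essential content is the one-line Taylor decomposition together with the exactness of Lagrange interpolation on $\pi_{m-1}$.
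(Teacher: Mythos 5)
Your argument is correct and complete. Note, however, that the paper does not actually prove this theorem: it simply refers the reader to standard references (Powell; Stroud) for the proof. What you have written is precisely the classical argument found in those sources — decompose $y$ via Taylor's theorem with integral remainder written with the truncated power $(t-s)_+^{m-1}$ so the upper limit is $b$, observe that the error functional $E_t$ annihilates $\pi_{m-1}$ by exactness of Lagrange interpolation at $m$ distinct nodes, and pull $E_t$ through the $s$-integral to read off the kernel $K_m(t,s)$. Two small remarks: the interchange you flag is even more elementary than Fubini, since $E_t$ is a finite linear combination of point evaluations and the swap is just linearity of the integral; and for $m\geq 2$ (the case relevant to the paper, which assumes $m>1$) the truncated power $(u-s)_+^{m-1}$ is continuous in $s$, so no boundary convention issues arise. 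Your proof supplies the content the paper delegates to the literature, and it does so correctly.
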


\begin{proof}
	Proofs of this important result may be found for example in 
	\cite{powell1981approximation} or \cite{stroud2012numerical}.
\end{proof}
On the subinterval $ (t_{n}, t_{n+1}] $, let $x_{j}=t_{n,j}=t_{n}+c_{j} h$, 
the 
interpolation error can be 
written as 
$$
e_{m}(y ; \theta):=y(t_{n}+\theta h)-\sum_{j=1}^{m} L_{j}(\theta) 
y\left(t_{n,j}\right), \quad \theta 
\in[0, 1].
$$ 
Thus, by theorem \ref{Peano's Theorem}, we can get the following corollary:
\begin{corollary} \label{Corollary}
	Under the assumptions of Theorem \ref{Peano's Theorem} and with 
	$[a,b]=[t_n,t_{n+1}]$, $t=t_n+\theta h \quad (\theta \in [0,1], 
	h:=t_{n+1}-t_{n}), \quad x_j=t_n+c_j h \quad (j=1, \ldots ,m)$ the 
	interpolation error 
	\begin{equation}
		e_{m}(y ; t):=y\left(t_{n}+\theta h\right)-\sum_{j=1}^{m} 
		L_{j}(\theta) y\left(t_{n}+c_{j} h\right), \quad \theta \in[0,1]
	\end{equation}
	can be expressed in the form
	\begin{equation}
		e_{m}\left(y ; t \right)=h^{m} \int_{0}^{1} 
		K_{m}(\theta, z) y^{(m)}\left(t_{n}+z h\right) \mathrm{d} z, \quad 
		\theta 
		\in[0,1]
	\end{equation}
	where
	$$
	K_{m}(\theta, z):=\frac{1}{(m-1) 
		!}\left\{(\theta-z)_{+}^{m-1}-\sum_{k=1}^{m} 
	L_{k}(\theta)\left(c_{k}-z\right)_{+}^{m-1}\right\}.
	$$
\end{corollary}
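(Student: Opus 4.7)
The plan is to obtain Corollary \ref{Corollary} by a direct affine change of variables in Theorem \ref{Peano's Theorem}, applied on the subinterval $[a,b] = [t_n, t_{n+1}]$ with nodes $x_j = t_n + c_j h$. First I would record the integral representation from Theorem \ref{Peano's Theorem} in this setting, namely
\begin{equation*}
e_m(y;t) = \int_{t_n}^{t_{n+1}} \widetilde{K}_m(t,s)\, y^{(m)}(s)\,\mathrm{d}s,
\end{equation*}
where $\widetilde{K}_m$ is the Peano kernel built from the truncated powers $(t-s)_{+}^{m-1}$ and $(x_k-s)_{+}^{m-1}$ together with the Lagrange basis on $[t_n,t_{n+1}]$ at the nodes $x_j$. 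At this stage the statement is exactly Theorem \ref{Peano's Theorem}; the work in the corollary is purely a rescaling to the reference interval $[0,1]$.

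Next I would perform the substitution $s = t_n + zh$, so $\mathrm{d}s = h\,\mathrm{d}z$ and $z$ runs over $[0,1]$, and write $t = t_n + \theta h$. For the truncated powers,
\begin{equation*}
(t-s)_{+}^{m-1} = h^{m-1}(\theta - z)_{+}^{m-1}, \qquad (x_k - s)_{+}^{m-1} = h^{m-1}(c_k - z)_{+}^{m-1},
\end{equation*}
which extracts a factor of $h^{m-1}$ from the kernel. Combined with the $h$ coming from the Jacobian, this yields the overall $h^m$ prefactor in the statement of the corollary.

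The one point that needs a little care, rather than being an actual obstacle, is the Lagrange basis. The basis functions $L_k$ in Theorem \ref{Peano's Theorem} are associated with nodes $x_k = t_n + c_k h$ on $[t_n,t_{n+1}]$, whereas in the corollary the $L_k(\theta)$ are the Lagrange basis on $[0,1]$ attached to the parameters $c_k$. Since the Lagrange basis is invariant under the affine change $t \mapsto (t-t_n)/h$ (each $L_k$ is determined only by $L_k(x_j) = \delta_{kj}$ and its polynomial degree), the identity $L_k(\theta) = \widetilde{L}_k(t_n + \theta h)$ holds pointwise, so no new content enters. Assembling the constant factor $1/(m-1)!$, the extracted $h^{m-1}$ from the truncated powers, and the $h$ from $\mathrm{d}s$, the right-hand side collapses exactly to
\begin{equation*}
h^m \int_0^1 K_m(\theta,z)\, y^{(m)}(t_n + zh)\,\mathrm{d}z,
\end{equation*}
with $K_m$ as displayed in the corollary. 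This completes the plan; the main subtlety is simply keeping the affine change of variables consistent across the truncated powers and the Lagrange basis, and tracking the net power of $h$.
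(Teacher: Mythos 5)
Your proposal is correct and is exactly the derivation the paper intends: the paper states the corollary as an immediate consequence of Theorem \ref{Peano's Theorem} without writing out the details, and the standard route is precisely your affine substitution $s=t_n+zh$, with the observation that the truncated powers each contribute $h^{m-1}$, the Jacobian contributes $h$, and the Lagrange basis is invariant under the affine map. Nothing is missing.
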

Define $R_{m, n}(\theta):=\int_{0}^{1} K_{m}(\theta, z) y^{(m)}\left(t_{n}+z 
h \right) d z$, we may resort to Peano's 
Theorem (see Corollary \ref{Corollary}) to write 
\begin{equation}
	y\left(t_{n}+\theta h \right)=\sum_{j=1}^{m} L_{j}(\theta) Y_{n, 
		j}+h^{m} 
	R_{m, n}(\theta), \quad \theta \in[0,1], 
\end{equation}
and $Y_{n,j} = y(t_{n,j})$. 

Consider that $u_{h}\left(t_{n}+\theta h 
\right)=\sum_{j=1}^{m} L_{j}(\theta) 
U_{n, j}$,
then for any $\theta$, the error between $y$ and the collocation solutions 
$u_{h}$ is represented by
\begin{equation}
	e_{h}\left(t_{n}+\theta h \right)=\sum_{j=1}^{m} L_{j}(\theta) 
	\mathcal{E}_{n, 
		j}+h^{m} R_{m, n}(\theta), \quad \theta \in(0,1]
	\label{3.3}
\end{equation}
with the collocation error, $\mathcal{E}_{n, j}:=Y_{n, j}-U_{n, j}$.\\

Rewrite the error $e_{h}$ with the Volterra integral equation \eqref{2.0}:
$$
\begin{aligned}
	e_{h}(t)&=y(t)-u_{h}(t)\\
	&=\int_{0}^{t} K(t, s)(y(s)-u_{h}(s)) \mathrm{d} s\\
	&=\int_{0}^{t} K(t, s)e_{h}(s) \mathrm{d} s.
\end{aligned}
$$ 
Then, for the error on collocation points $ t_{n,i}$, we have the following 
equation
\begin{equation}
	e_{h}\left(t_{n, i}\right)=\left(\mathcal{V} e_{h}\right)\left(t_{n, 
		i}\right), \quad 0 \leq n \leq N-1, \quad i=1, \ldots, m.
	\label{3.6}
\end{equation}
Furthermore, for the right side of the equation \eqref{3.6}
$$
\begin{aligned}
	\left(\mathcal{V} e_{h}\right)\left(t_{n, i}\right)
	=& \int_{0}^{t_{n}} 
	K\left(t_{n, i}, s\right) e_{h}(s) \mathrm{d} s+h \int_{0}^{c_{i}} 
	K\left(t_{n, 
		i}, t_{n}+s h\right) e_{h}\left(t_{n}+s h\right) \mathrm{d} s \\
	=& \sum_{l=0}^{n-1} h \int_{0}^{1} K\left(t_{n, i}, t_{l}+s 
	h \right)\left(\sum_{j=1}^{m} L_{j}(s) \mathcal{E}_{l, 
		j}+h^{m} R_{m, l}(s)\right) \mathrm{d} s \\
	&+h \int_{0}^{c_{i}} K\left(t_{n, i}, t_{n}+s 
	h \right)\left(\sum_{j=1}^{m} L_{j}(s) \mathcal{E}_{n, j}+h^{m} 
	R_{m, n}(s)\right) \mathrm{d} s
\end{aligned}
$$
Hence, we have 
$$
\begin{aligned}
	\mathcal{E}_{n, i} =&h \sum_{j=1}^{m}\left(\int_{0}^{c_{i}} 
	K\left(t_{n, i}, t_{n}+s h \right) L_{j}(s) \mathrm{d} s\right) 
	\mathcal{E}_{n, 
		j} \\
	&+ \sum_{l=0}^{n-1} h \sum_{j=1}^{m}\left(\int_{0}^{1} 
	K\left(t_{n, i}, t_{l}+s h \right) L_{j} (s) \mathrm{d} s\right) 
	\mathcal{E}_{l, j} \\
	&+\sum_{l=0}^{n-1} h^{m+1} \int_{0}^{1} K\left(t_{n, i}, 
	t_{l}+s h\right) R_{m, l}(s) \mathrm{d} s \\
	&+h^{m+1} \int_{0}^{c_{i}} K\left(t_{n, i}, t_{n}+s h \right) R_{m, 
		n}(s) \mathrm{d} s \quad i=1, \ldots, m
\end{aligned}
$$

Let $\mathcal{E}_{n}:=\left(\mathcal{E}_{n, 1}, \ldots, 
\mathcal{E}_{n, m}\right)^{T} \in \mathbb{R}^{m}$, we obtain a system of 
linear equations about the collocation error
\begin{equation}
	\left[\mathcal{I}_{m}-h B_{n}\right] 
	\mathcal{E}_{n}=\sum_{l=0}^{n-1} h B_{n}^{(l)} 
	\mathcal{E}_{l}+\sum_{l=0}^{n-1} h^{m+1} 
	\rho_{n}^{(l)}+h^{m+1} \rho_{n}, \quad 0 \leq n \leq N-1
	\label{3.7}
\end{equation}
The vectors $\rho_{n}^{(l)}$ and $\rho_{n}$ in $\mathbb{R}^{m}$ are
$$
\rho_{n}^{(l)}:=\left(\int_{0}^{1} K\left(t_{n, i}, t_{l}+s 
h \right) R_{m, l}(s) \mathrm{d} s, \right)^{T}, 
\quad l< n
$$
and
$$
\rho_{n}:=\left(\int_{0}^{c_{i}} K\left(t_{n, i}, t_{n}+s h \right) R_{m, 
	n}(s) \mathrm{d} s,  \right)^{T}
$$
with $ i=1, \ldots, m$.

We are now ready to give the convergence theorem.
\begin{thm}
	Assume the Volterra integral equation satisfies \\
	(a) $K \in C^{m}(D)$ and $g \in C^{m}(I)$.\\
	(b) $u_{h} \in S_{m-1} \left(I_{h}\right)$ is the collocation solution to 
	\eqref{2.2} defined by \eqref{2.3}.
	Then 
	\begin{equation}
		\left\|y-u_{h}\right\|_{\infty}:=\sup _{t \in 
			I}\left|y(t)-u_{h}(t)\right| 
		\leq C\left\|y^{(m)}\right\|_{\infty} h^{m}
	\end{equation}
	holds for any sets $X_{h}$ of collocation points with $0 \leq c_{1} \leq 
	\cdots \leq c_{m} \leq 1$. The constant $C$ depends on the $c_{i}$ but not 
	on $h$.
\end{thm}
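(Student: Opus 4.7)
The plan is to derive the $O(h^m)$ bound by extracting, from the linear system \eqref{3.7} for the collocation errors $\mathcal{E}_n$, a discrete Gronwall-type inequality, and then feeding the resulting pointwise bound on $\mathcal{E}_n$ back into the error representation \eqref{3.3}. The key observation is that each of the perturbation vectors $\rho_n^{(l)}$, $\rho_n$ is $O(\|y^{(m)}\|_\infty)$ independently of $h$, thanks to Corollary \ref{Corollary} applied to $y$, which is itself $C^m$ on $I$ because $K\in C^m(D)$ and $g\in C^m(I)$ (a standard regularity result for second-kind Volterra equations).

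First I would fix constants: since $K$ is continuous on the compact set $D$, the matrices $B_n$ and $B_n^{(l)}$ defined in \eqref{2.8}--\eqref{2.9} are bounded uniformly by some $M=M(K,\{c_i\})$. For $h$ small enough that $hM<\tfrac12$, the Neumann argument already used in Section \ref{section:Collocation methods for linear second-kind VIEs} gives $\|(\mathcal{I}_m-hB_n)^{-1}\|\le 2$. Also, by Corollary \ref{Corollary}, $|R_{m,n}(\theta)|\le \|K_m(\cdot,\cdot)\|_\infty \|y^{(m)}\|_\infty =: \tilde C \|y^{(m)}\|_\infty$ uniformly in $n,\theta$, so the components of the Peano-remainder vectors satisfy $\|\rho_n\|,\|\rho_n^{(l)}\|\le M\tilde C\|y^{(m)}\|_\infty$.

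Next, solving \eqref{3.7} for $\mathcal{E}_n$ and taking norms, I would obtain
\begin{equation*}
\|\mathcal{E}_n\|\le 2hM\sum_{l=0}^{n-1}\|\mathcal{E}_l\| + 2h^{m+1}\bigl(n+1\bigr)M\tilde C\|y^{(m)}\|_\infty.
\end{equation*}
Because $nh\le T$, the inhomogeneous term is bounded by $C_0 h^m\|y^{(m)}\|_\infty$ with $C_0=2(T+h_0)M\tilde C$. A standard discrete Gronwall lemma then yields $\|\mathcal{E}_n\|\le C_0\|y^{(m)}\|_\infty e^{2MT}h^m$ for all $0\le n\le N-1$; the usual inductive proof $\|\mathcal{E}_n\|\le C_0 h^m(1+2Mh)^n\le C_0 h^m e^{2MT}$ works verbatim.

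Finally, I would insert this into \eqref{3.3}. For any $t\in\sigma_n$, write $t=t_n+\theta h$ with $\theta\in(0,1]$; then
\begin{equation*}
|e_h(t)|\le \Bigl(\sum_{j=1}^m|L_j(\theta)|\Bigr)\|\mathcal{E}_n\|_\infty + h^m|R_{m,n}(\theta)|\le \bigl(\Lambda_m\, C_0 e^{2MT}+\tilde C\bigr)\|y^{(m)}\|_\infty h^m,
\end{equation*}
where $\Lambda_m=\max_{\theta\in[0,1]}\sum_j|L_j(\theta)|$ is the Lebesgue constant of the collocation parameters, depending only on the $c_i$. Taking the supremum over $t\in I$ gives the claimed estimate with $C=\Lambda_m C_0 e^{2MT}+\tilde C$. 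I expect the main obstacle to be the Gronwall step; one has to be careful that the exponential amplification stays bounded uniformly in $h$ (i.e., in $N$), which is exactly why the condition $nh\le T$ is critical, and why the constant $C$ necessarily depends on $T$ through $e^{2MT}$ even though it is independent of $h$.
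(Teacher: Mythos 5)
Your proposal is correct and follows essentially the same route as the paper: bound the perturbation vectors $\rho_n^{(l)},\rho_n$ via the Peano kernel estimate of Corollary \ref{Corollary}, invert $\mathcal{I}_m-hB_n$ uniformly by the Neumann argument, apply a discrete Gronwall inequality to the resulting recursion for $\|\mathcal{E}_n\|$, and substitute back into the local error representation \eqref{3.3}. Your explicit remark that $y\in C^m(I)$ must first be inferred from $K\in C^m(D)$ and $g\in C^m(I)$ is a point the paper leaves implicit, but otherwise the two arguments coincide up to the choice of constants.
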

\begin{proof}
	As the previous analysis, we conclude that the collocation error 
	$\mathcal{E}_{n}$ 
	is determined by \eqref{3.7}. And according 
	to the Neumann Lemma (\citealt{ortega1990numerical}), the  inverse of 
	matrix  $\mathcal{I}_{m}-h B_{n}$ exists, whenever 
	$h \left\|B_{n}\right\|<1$ for matrix norm.  In other words, for any grid 
	$I_{h}$, each matrix $\mathcal{I}_{m}-h B_{n}$ has a 
	uniformly bound inverse:
	$$
	\left\|\left(\mathcal{I}_{m}-h B_{n}\right)^{-1}\right\|_{1} \leq D_{0}, 
	\quad n=0,1, \ldots, N-1.
	$$
	
	Assume that $\left\|B_{n}^{(l)}\right\|_{1} \leq D_{1}$ for $0 \leq l<n 
	\leq N-1$, and
	$$
	\left\|\rho_{n}^{(l)}\right\|_{1} \leq m \bar{K} k_{m} M_{m}, \quad l<n, 
	\quad\left\|\rho_{n}\right\|_{1} \leq m \bar{K} k_{m} M_{m}.
	$$
	
	We define 
	$$
	M_{m}:=\left\|y^{(m)}\right\|_{\infty}, \quad k_{m}:=\max _{\theta 
		\in[0,1]} 
	\int_{0}^{1}\left|K_{m}(\theta, z)\right| \mathrm{d} z
	$$
	and
	$$
	\bar{K}:=\max _{t \in I} \int_{0}^{t}|K(t, s)| \mathrm{d} 
	s=\|\mathcal{V}\|_{\infty}
	$$
	Then, from (3.7)
	$$
	\left\|\mathcal{E}_{n}\right\|_{1} \leq D_{0} D_{1} \sum_{l=0}^{n-1} 
	h \left\|\mathcal{E}_{l}\right\|_{1}+D_{0}\left[m \bar{K} k_{m} M_{m} 
	\sum_{l=0}^{n-1} h^{m+1}+h^{m+1} m \bar{K} k_{m} M_{m}\right]
	$$
	and hence
	\begin{equation}
		\left\|\mathcal{E}_{n}\right\|_{1} \leq \gamma_{0} \sum_{l=0}^{n-1} 
		h \left\|\mathcal{E}_{l}\right\|_{1}+\gamma_{1} M_{m} h^{m}, 
		\quad n=0,1, \ldots, N-1,
		\label{3.9}
	\end{equation}
	where $\gamma_{0}:=D_{0} D_{1}$, $\gamma_{1}:=m D_{0} \bar{K} 
	k_{m}(T+h)$.\\ 
	Using Gronwall's inequality (\citealt{10.2307/2035962}), equation 
	\eqref{3.9} is 
	bounded by 
	$$
	\begin{aligned}
		\left\|\mathcal{E}_{n}\right\|_{1} & \leq \gamma_{1} M_{m} h^{m} \exp 
		\left(\gamma_{0} \sum_{l=0}^{n-1} h \right) \\
		& \leq \gamma_{1} M_{m} h^{m} \exp \left(\gamma_{0} T\right) 
		\quad(n=0,1, 
		\ldots, N-1)
	\end{aligned}
	$$
	In other words, there exists a constant $B<\infty$ such that, uniformly 
	for 
	$h\in (0,\bar{h})$,
	$$
	\left\|\mathcal{E}_{n}\right\|_{1} \leq B M_{m} h^{m}, \quad n=0,1, 
	\ldots, N-1.
	$$
	Setting $\Lambda_{m}:= \max_{j}\left\|L_{j}\right\|_{\infty}$, the local 
	error representation \eqref{3.3} is
	$$
	\left|e_{h}\left(t_{n}+\theta h \right)\right| \leq 
	\Lambda_{m}\left\|\mathcal{E}_{n}\right\|_{1}+h^{m} k_{m} M_{m} 
	\leq\left(\Lambda_{m} B+k_{m}\right) M_{m} h^{m}
	$$
	So for $\theta \in[0,1]$ and $n=0,1, \ldots, N-1$, we have
	$$
	\left\|e_{h}\right\|_{\infty} 
	\leq 
	C\left\|y^{(m)}\right\|_{\infty} h^{m},
	$$
	where the constant $C$ depends on the $c_{i}$ but not 
	on 
	$h$.	
\end{proof}
\begin{remark}
	For the collocation methods of Volterra integral equation, there are many 
	papers discussing the convergence under different conditions; see for 
	example
	\cite{diogo1994} and \cite{brunner_2017}. This paper only gives a rough 
	proof 
	of  the smooth kernel 
	function. For more detailed proof, refer to \citet{brunner_2004}.
\end{remark}

\section{Numerical results}\label{section:Numerical results}
In this section, we present some numerical examples to show that the 
collocation method is very efficient for computing the Gerber-Shiu function. 
All results are performed in MATLAB on Windows, with Intel(R) Core(TM) i7 
CPU, at 2.60 GHz and a RAM of 16 GB.
The collocation solution is now determined by the resulting 
system \eqref{2.11} and the local Langrange representation \eqref{2.5}. In all 
examples, we set $c=1.2$, $\lambda=1$ and $\delta=0.01$.
Here we consider the following claim size densities:
\begin{enumerate}
	\item[(1)] Exponential density: $f(x)=\mathrm{e}^{-x}, x>0$.
	\item[(2)] Erlang(2) density: $f(x)=4x\mathrm{e}^{-2x}, x>0 $.
	\item[(3)] Combination of exponentials density: $f(x)=3 \mathrm{e}^{-1.5 
		x}-3 \mathrm{e}^{-3 x}$.
\end{enumerate}
Note that these distributions have a common mean of $1$. We will compute the 
following three special Gerber-Shiu functions:  
\begin{enumerate}
	\item[(1)] Ruin probability with interest: 
	$\Phi_{\delta}(u)=\mathbb{P}\left(\tau<\infty \mid 
	U_{\delta}(0)=u\right)$, where  $w(x, y) 
	\equiv 1$.
	\item[(2)] Expected claim size causing ruin: 
	$\Phi_{\delta}(u)=\mathbb{E}\left[\left(U_{\tau-}+\left|U_\tau\right|\right)
	 1_{(\tau<\infty)} \mid U(0)=u\right]$, where  
	$w(x, y)=x+y$.
	\item[(3)] Expected deficit at ruin: 
	$\Phi_{\delta}(u)=\mathbb{E}\left[\left|U_\tau\right| 1_{(\tau<\infty)} 
	\mid 
	U(0)=u\right]$, where $ w(x, y)=y$.
\end{enumerate}
Throughout this section, we take $c_1=\frac{1}{3}$, $c_2=\frac{2}{3}$ when 
$m=2$ and $c_1=\frac{1}{3}$, $c_2=\frac{2}{3}$, $c_3=1$ when $m=3$, and we set 
the grid  $h=30/N$.   In Figure 
\ref{Fig.1}, we plot numerical result curves of exponential claim size density 
when $N$=4096. We conducted a comparison of our approach with other methods 
reported in the literature, and we find that the ruin probability we obtained 
were highly similar to those presented in the studies by \cite{zhang2018new}.
 In Figures \ref{Fig.2}--\ref{Fig.4}, we plot the relative errors under 
 different $N$ on 
the same picture to show variability bands and illustrate the stability of the 
procedures. We observe that the error are very close to each 
other when $N$ is large enough.  
\begin{figure}[htbp]
	\centering  
	\subfigure[]{
		\label{Fig.sub.1}
		\includegraphics[width=0.3\textwidth]{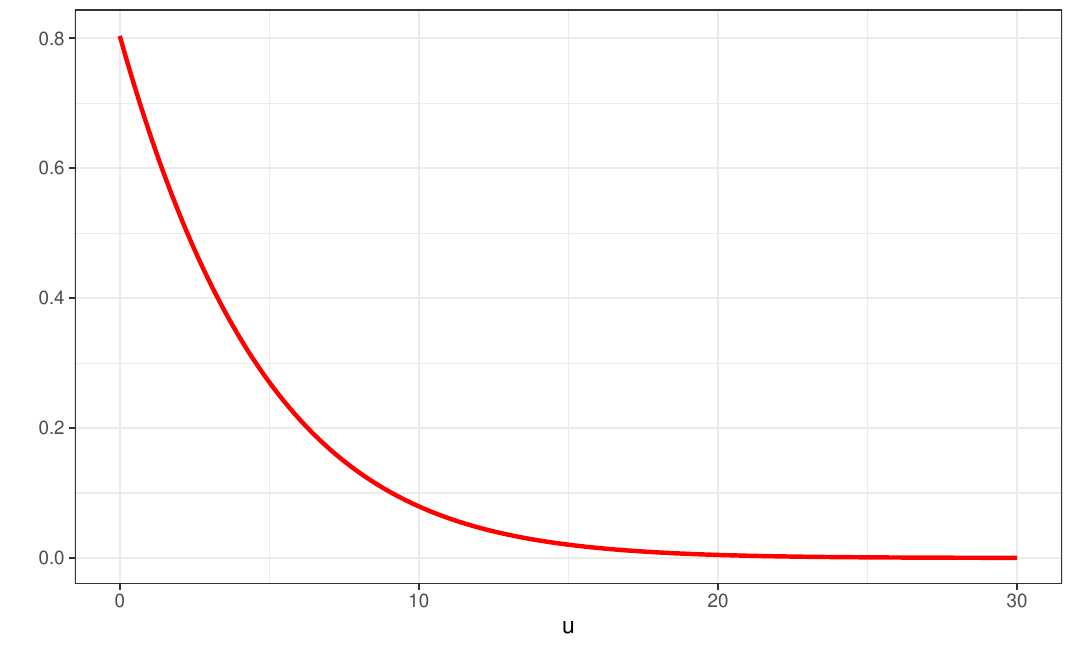}}
	\subfigure[]{
		\label{Fig.sub.2}
		\includegraphics[width=0.3\textwidth]{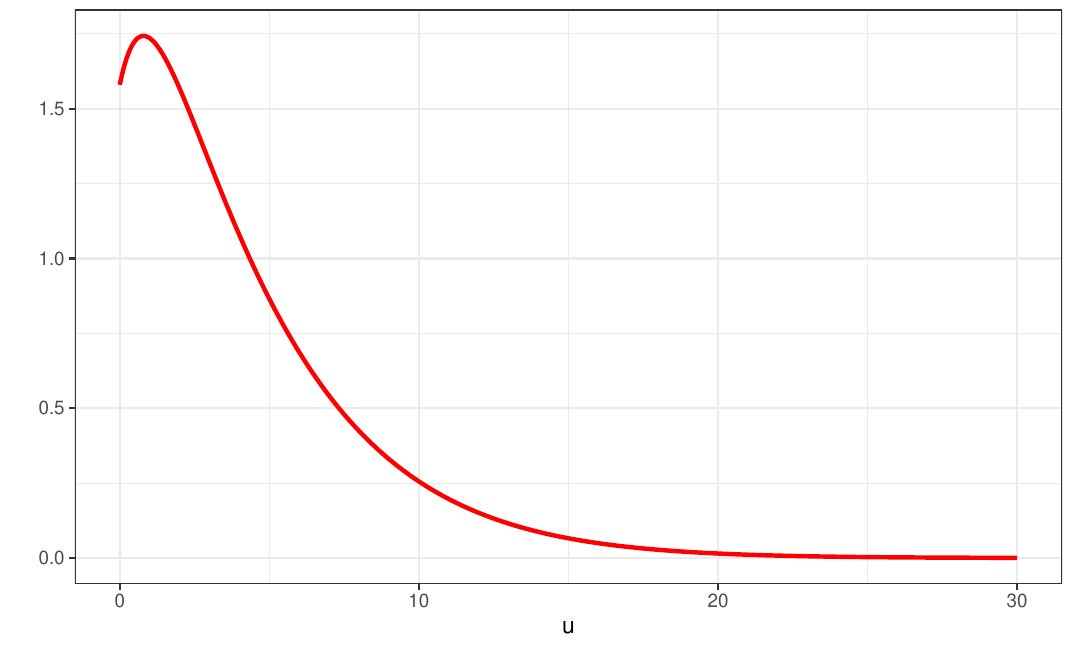}}
	\subfigure[]{
		\label{Fig.sub.3}
		\includegraphics[width=0.3\textwidth]{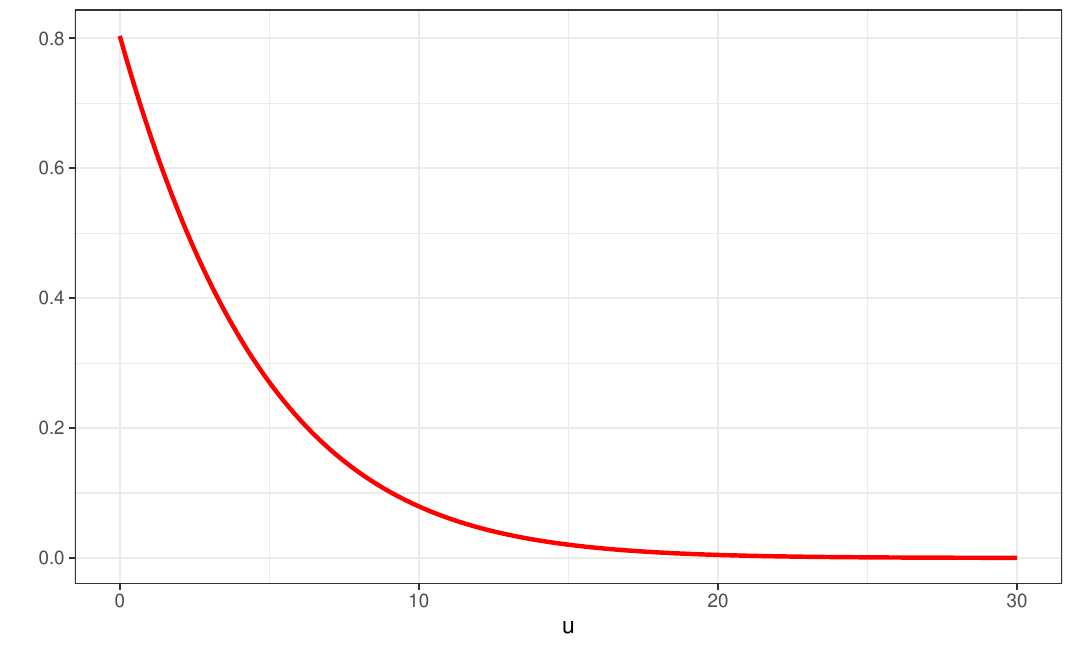}}
	\caption{Computing the Gerber-Shiu functions with Exponential claim size 
	density where $N=4096$. (a) ruin probability; (b) expected claim size 
	causing ruin; (b) expected deficit at ruin.}
	\label{Fig.1}
\end{figure}

\begin{figure}[htbp]
	\centering  
	\subfigure[]{
		\label{Fig2.sub.1}
		\includegraphics[width=0.3\textwidth]{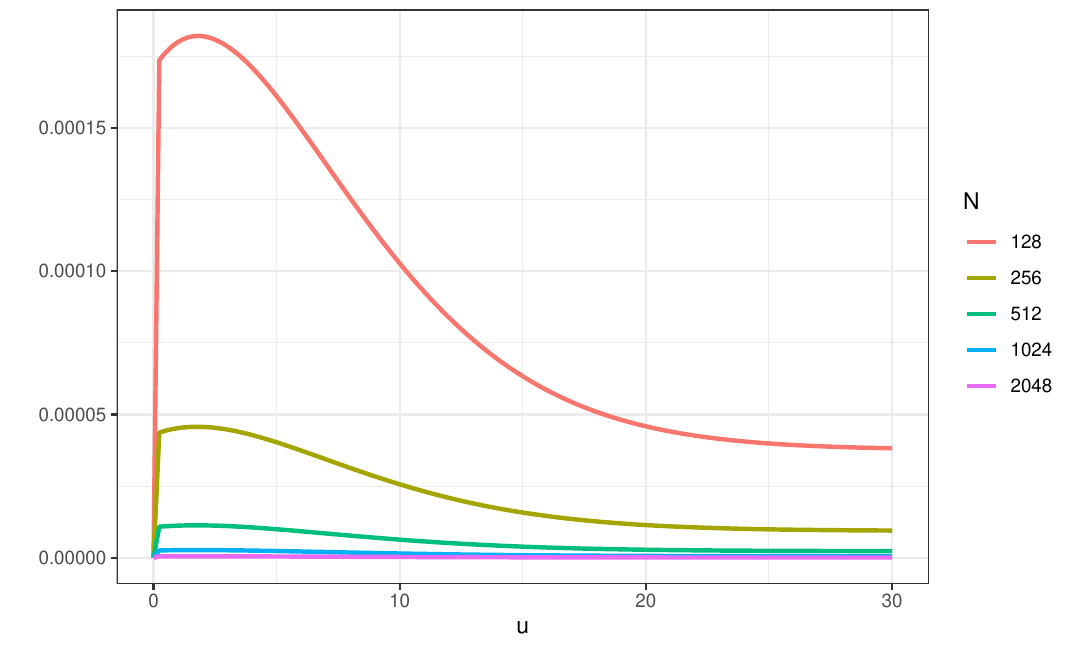}}
	\subfigure[]{
		\label{Fig2.sub.2}
		\includegraphics[width=0.3\textwidth]{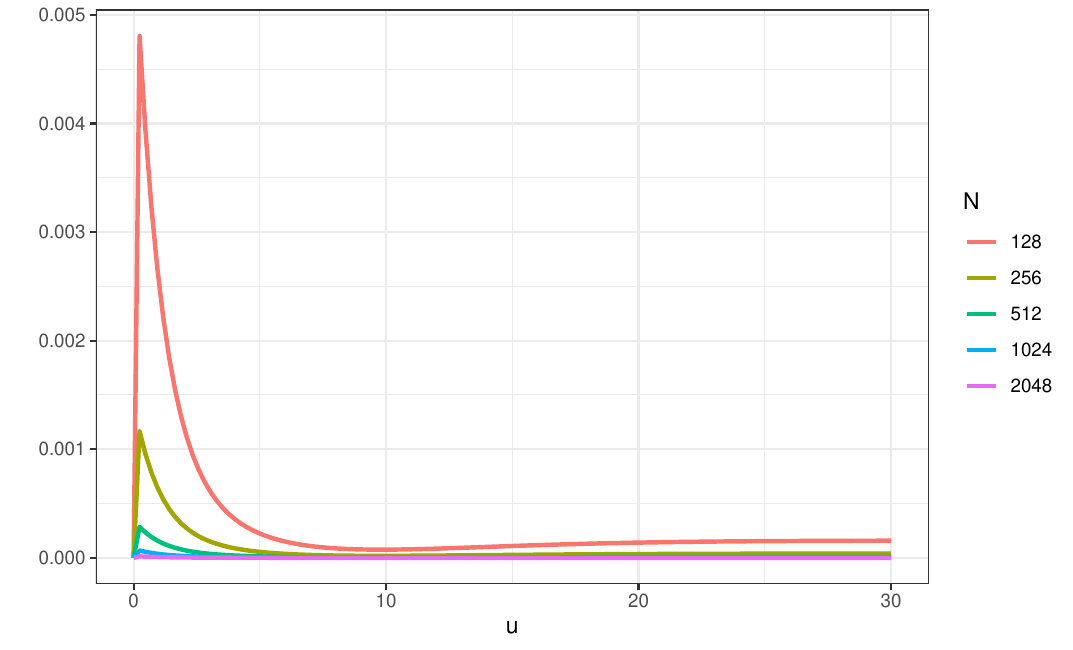}}
	\subfigure[]{
		\label{Fig2.sub.3}
		\includegraphics[width=0.3\textwidth]{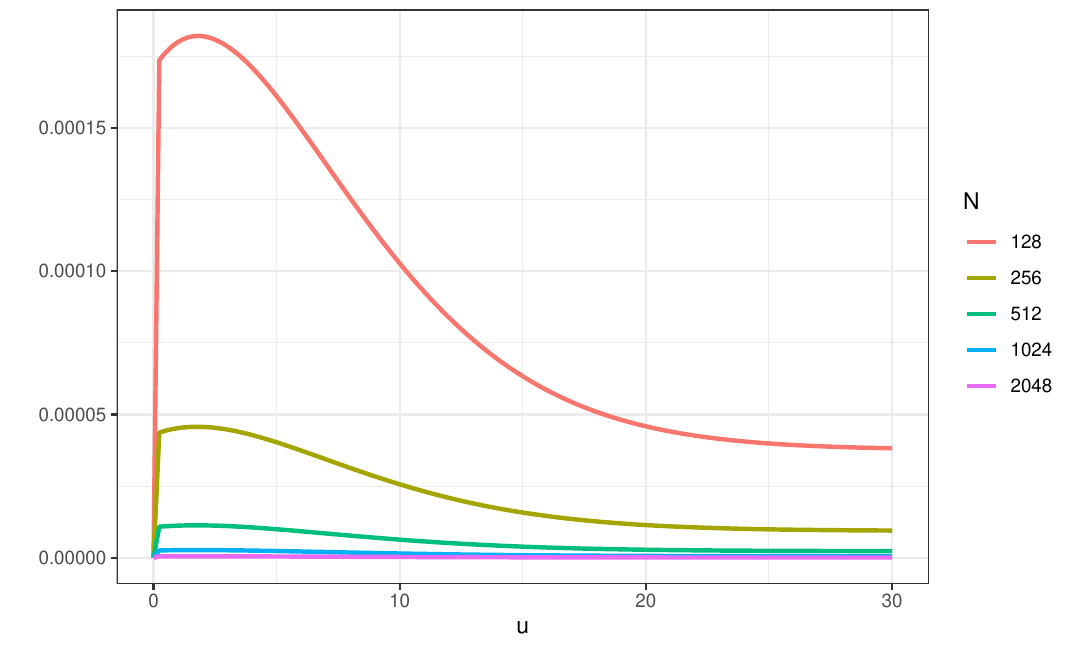}}
	\caption{Relative error of the Gerber-Shiu functions with Exponential 
	claim size 
		density. (a) ruin probability; (b) expected claim size 
		causing ruin; (b) expected deficit at ruin.}
	\label{Fig.2}
\end{figure}

\begin{figure}[htbp]
	\centering  
	\subfigure[]{
		\label{Fig3.sub.1}
		\includegraphics[width=0.3\textwidth]{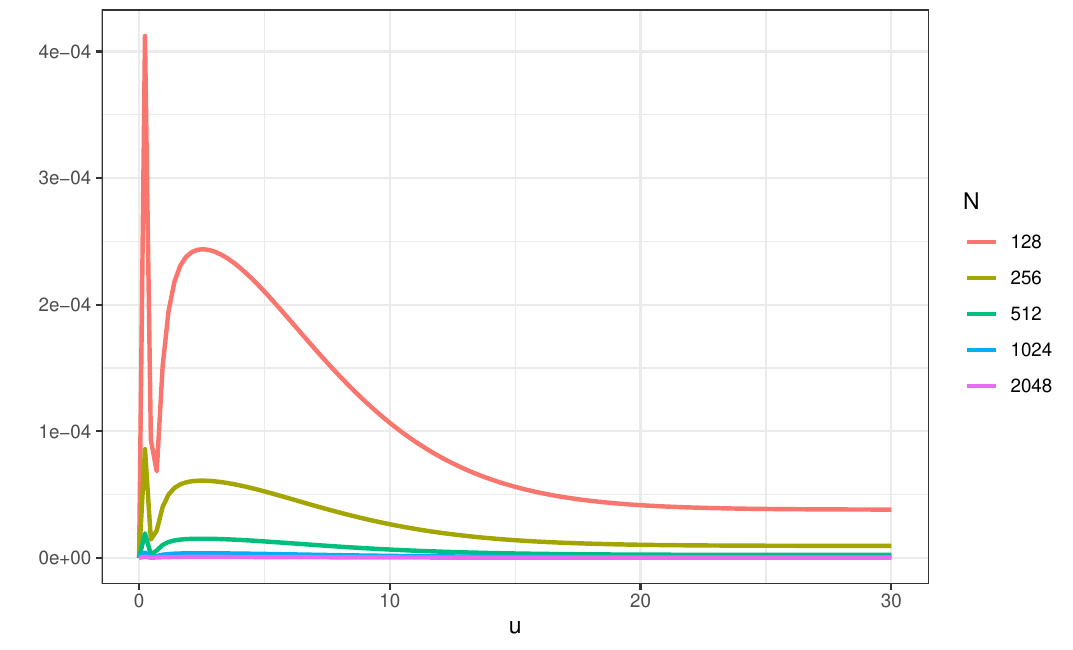}}
	\subfigure[]{
		\label{Fig3.sub.2}
		\includegraphics[width=0.3\textwidth]{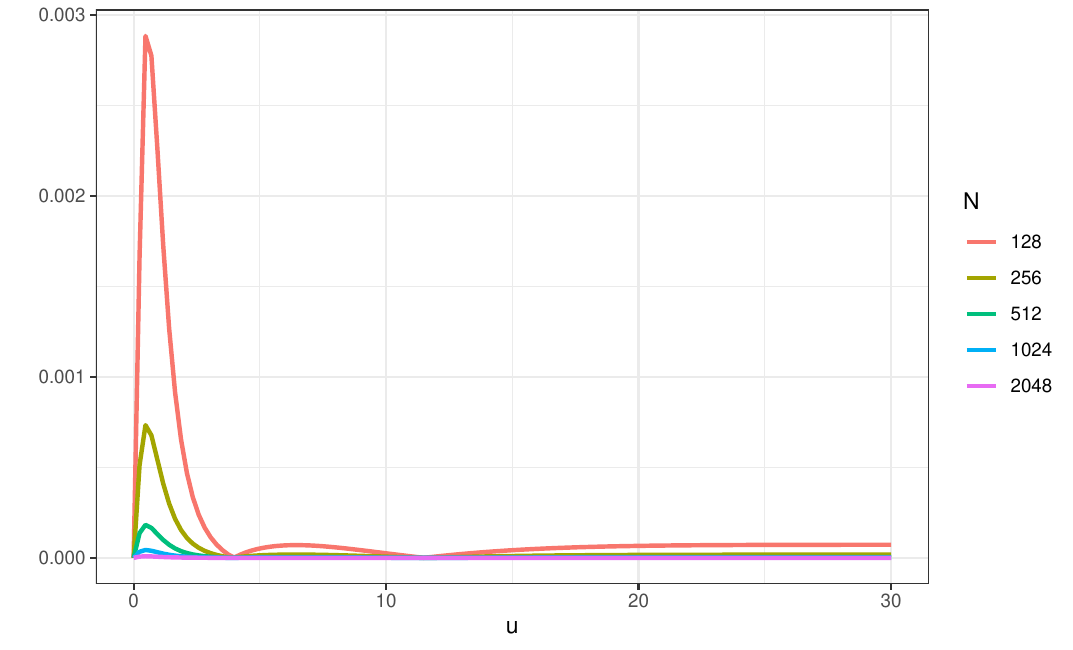}}
	\subfigure[]{
		\label{Fig3.sub.3}
		\includegraphics[width=0.3\textwidth]{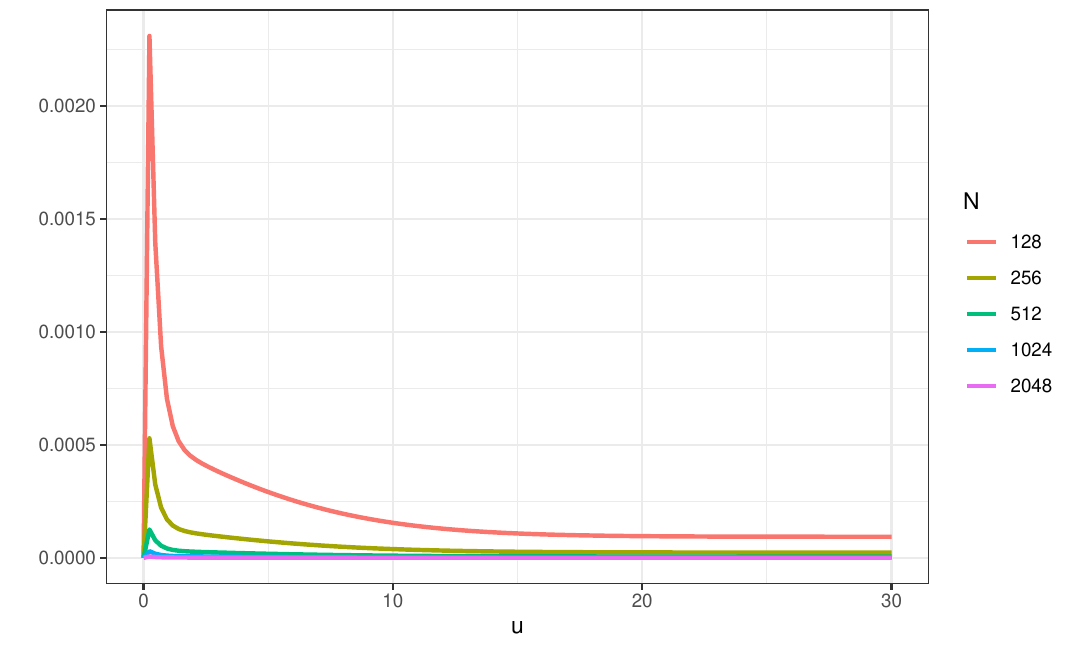}}
	\caption{Relative error of the Gerber-Shiu functions with Erlang 
		claim size 
		density. (a) ruin probability; (b) expected claim size 
		causing ruin; (b) expected deficit at ruin.}
	\label{Fig.3}
\end{figure}

\begin{figure}[htbp]
	\centering  
	\subfigure[]{
		\label{Fig4.sub.1}
		\includegraphics[width=0.3\textwidth]{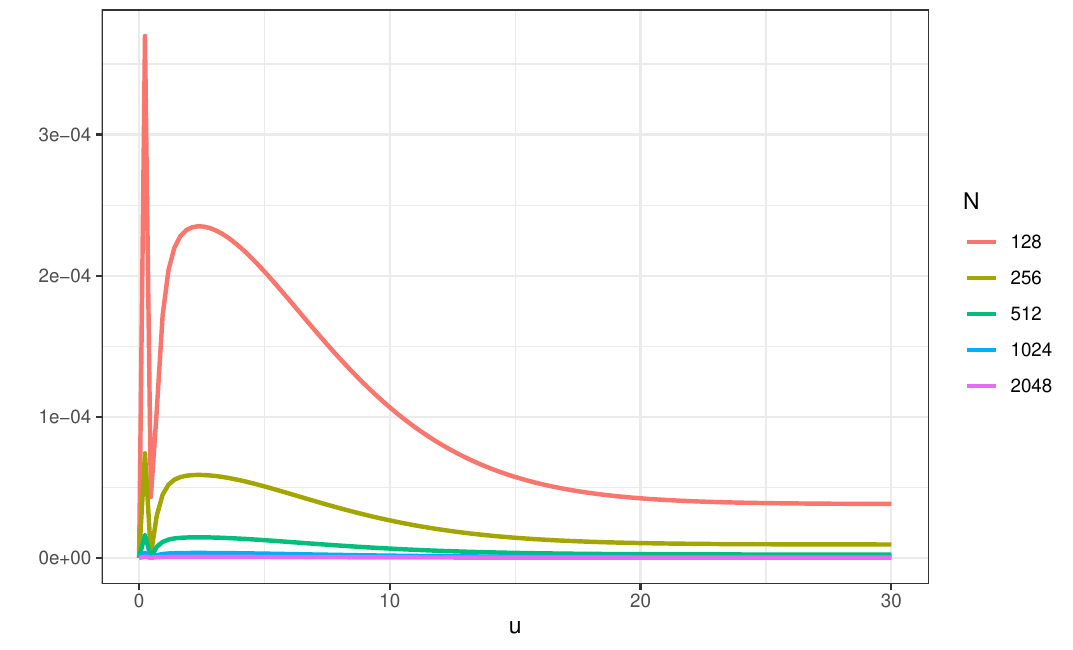}}
	\subfigure[]{
		\label{Fig4.sub.2}
		\includegraphics[width=0.3\textwidth]{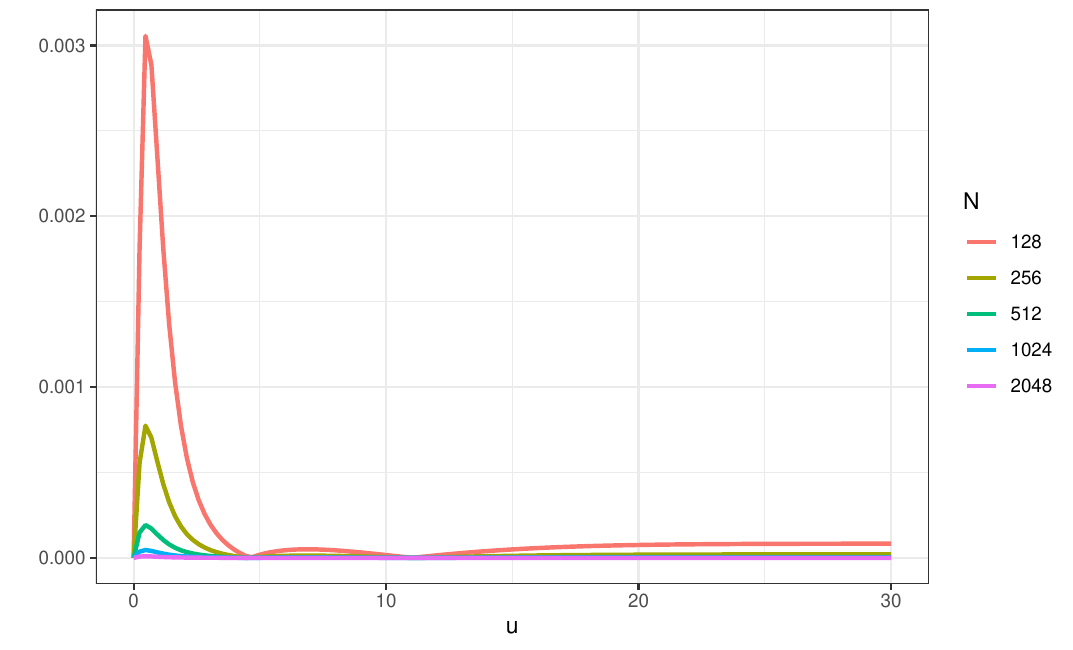}}
	\subfigure[]{
		\label{Fig4.sub.3}
		\includegraphics[width=0.3\textwidth]{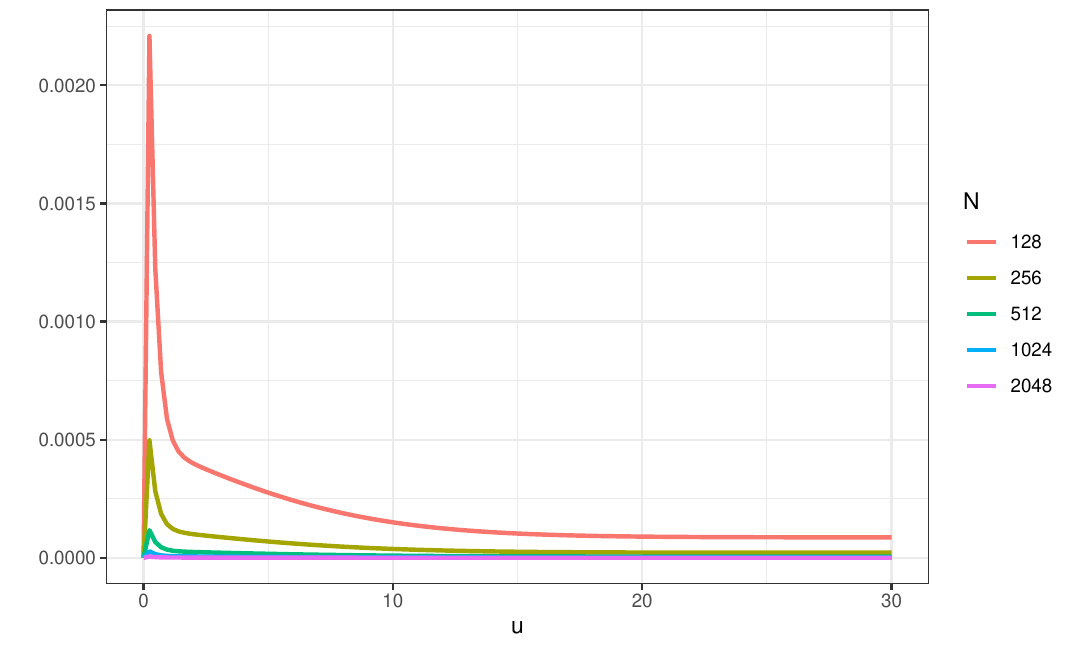}}
	\caption{Relative error of the Gerber-Shiu functions with combination of 
	exponentials density . (a) ruin probability; (b) expected claim size 
		causing ruin; (b) expected deficit at ruin.}
	\label{Fig.4}
\end{figure}

Next, in order to verify the result in section \ref{section:Convergence of 
collocation methods}, we compute the errors and convergences orders. For 
equations with analytical solutions, the error $E_{N}^1$ is given 
by
$
E_{N}^{1}=\sup _{t \in I}\left| u_{h}(t)-y(t) \right|.
$
While the analytical solution cannot be written out explicitly, its error is 
defined as
$
E_{N}^{2}=\left|u_{h}^{N}(T)-u_{h}^{2 N}(T)\right|.
$
At this time, the corresponding convergence order can be expressed as
$$
p_i =\log_{2}\left(\frac{E_{N}^{i}}{E_{2 N}^{i}}\right), \quad i=1,2.
$$

The errors and  convergence orders with $u=5$ are given in Table 
\ref{tab:Errors and convergences order for exponential distribution when 
	$m=2$}
 and \ref{tab:Errors and convergences order for exponential distribution when 
 	$m=3$}. We calculate $E_{N}^1$ and $p_1$ for Ruin 
Probability with interest. The exact ruin probability in the compound Poisson 
model with exponential claims, is calculated using the results in 
\cite{SUNDT19957}. We found that when $N$ is large enough (grid is small 
enough), it is difficult for the numerical solution to change for a given 
number of decimal places. For Expected claim 
size causing ruin and Expected deficit at ruin, $E_{N}^2$ and $p_2$ is 
computed. The results show that the global convergence order of the 
2-point collocation method is 2, and the global convergence order of the 
3-point collocation method is 3, which is consistent with the conclusion 
of Section \ref{section:Convergence of collocation methods}.
\begin{table}[htbp]
	\centering
	\caption{Errors and convergences order for exponential distribution when 
		$m=2$}
	\resizebox{\textwidth}{!}{
	    \begin{tabular}{lcccllcccllccc}
		\toprule
		\multicolumn{4}{c}{Ruin Probability(u=5)} &       & 
		\multicolumn{4}{c}{expected claim size causing ruin(u=5)} &       & 
		\multicolumn{4}{c}{expected deficit at ruin(u=5)} \\
		\midrule
		& Value & $E_{N}^{1}$ &   $p_1$    &       &       & Value & 
		$E_{N}^{2}$ &  $p_2$     
		&       &       & Value & $E_{N}^{2}$ & $p_2$ \\
		$N$=64  & 0.2705232  & 1.8019E-05 & \multicolumn{1}{c}{} &       & 
		N=64  
		& 0.8649623  &       &       &       & $N$=64  & 0.2705232  &       &  
		\\
		$N$=128 & 0.2705367  & 4.5111E-06 & \multicolumn{1}{c}{1.9979 } 
		&       
		& $N$=128 & 0.8649439  & 1.8335E-05 &       &       & $N$=128 & 
		0.2705367  
		& 1.3508E-05 &  \\
		$N$=256 & 0.2705401  & 1.1286E-06 & \multicolumn{1}{c}{1.9990 } 
		&       
		& $N$=256 & 0.8649394  & 4.5585E-06 & 2.0080  &       & $N$=256 & 
		0.2705401  & 3.3825E-06 & 1.9976  \\
		$N$=512 & 0.2705409  & 2.8225E-07 & \multicolumn{1}{c}{1.9995 } 
		&       
		& $N$=512 & 0.8649382  & 1.1364E-06 & 2.0041  &       & $N$=512 & 
		0.2705409  & 8.4635E-07 & 1.9988  \\
		$N$=1024 & 0.2705411  & 7.0575E-08 & \multicolumn{1}{c}{1.9997 } 
		&       
		& $N$=1024 & 0.8649380  & 2.8370E-07 & 2.0021  &       & $N$=1024 & 
		0.2705411  & 2.1168E-07 & 1.9994  \\
		$N$=2048 & 0.2705412  & 1.7645E-08 & \multicolumn{1}{c}{1.9999 } 
		&       
		& $N$=2048 & 0.8649379  & 7.0873E-08 & 2.0010  &       & $N$=2048 & 
		0.2705412  & 5.2930E-08 & 1.9997  \\
		\bottomrule
	\end{tabular}}
	\label{tab:Errors and convergences order for exponential distribution when 
		$m=2$}
\end{table}

\begin{table}[htbp]
	\centering
	\caption{Errors and convergences order for exponential distribution when 
		$m=3$}
	\resizebox{\textwidth}{!}{
	\begin{tabular}{lcccllcccllccc}
		\toprule
		\multicolumn{4}{c}{Ruin Probability(u=5)} &       & 
		\multicolumn{4}{c}{expected claim size causing ruin(u=5)} &       & 
		\multicolumn{4}{c}{expected deficit at ruin(u=5)} \\
		\midrule
		& Value & $E_{N}^{1}$ &   $p_1$    &       &       & Value & 
		$E_{N}^{2}$ &  $p_2$     
		&       &       & Value & $E_{N}^{2}$ & $p_2$ \\
		$N$=64  & 0.2705412  & 1.8915E-08 & \multicolumn{1}{c}{} &       & 
		$N$=64  
		& 0.8649388  &       &       &       & $N$=64  & 0.2705412  &       &  
		\\
		$N$=128 & 0.2705412  & 2.3592E-09 & \multicolumn{1}{c}{3.0032 } 
		&       
		& $N$=128 & 0.8649380  & 8.3565E-07 &       &       & $N$=128 & 
		0.2705412  
		& 1.6556E-08 &  \\
		$N$=256 & 0.2705412  & 2.9457E-10 & \multicolumn{1}{c}{3.0016 } 
		&       
		& $N$=256 & 0.8649379  & 1.0430E-07 & 3.0021  &       & $N$=256 & 
		0.2705412  & 2.0646E-09 & 3.0034  \\
		$N$=512 & 0.2705412  & 3.6801E-11 & \multicolumn{1}{c}{3.0008 } 
		&       
		& $N$=512 & 0.8649379  & 1.3028E-08 & 3.0011  &       & $N$=512 & 
		0.2705412  & 2.5777E-10 & 3.0017  \\
		$N$=1024 & 0.2705412  & 4.5994E-12 & \multicolumn{1}{c}{3.0002 } 
		&       
		& $N$=1024 & 0.8649379  & 1.6279E-09 & 3.0006  &       & $N$=1024 & 
		0.2705412  & 3.2202E-11 & 3.0009  \\
		$N$=2048 & 0.2705412  & 5.7476E-13 & \multicolumn{1}{c}{3.0004 } 
		&       
		& $N$=2048 & 0.8649379  & 2.0344E-10 & 3.0003  &       & $N$=2048 & 
		0.2705412  & 4.0250E-12 & 3.0001  \\
		\bottomrule
	\end{tabular}}
	\label{tab:Errors and convergences order for exponential distribution when 
		$m=3$}
\end{table}

\section{Concluding remarks}\label{section:Concluding Comments}
In this paper, we apply the collocation method to compute the Gerber-Shiu 
function in the classical risk model. Through several numerical 
illustrations, 
we find that this method is very efficient for the computations.

Not only for the Volterra integral equation, the collocation method can also 
be used in some integro-differential equations. So we can use this method to 
solve other ruin related problems in risk theory. For example, in 
\cite{SHELDONLIN2003551} and \cite{YUEN2007104}, the Gerber-Shiu function can 
be expressed as a specific class of integro-differential equation in the 
classical surplus process with a constant dividend barrier. 

For the mathematical aspects, we focused on the convergence results of 
collocation methods.  When the kernel $K$ is sufficiently regular, the 
convergence is closely related to the number of collocation parameters. 
However, if the 
distribution function $F(x)$ does not satisfy the required assumptions, the 
desired 
result could not be obtained. Finding a suitable method to improve the 
calculation accuracy is an open problem, when the kernel function is not 
smooth enough. We leave that aside for future 
investigation.

	
	
	
	

\section*{Acknowledgement}
The author would like to thank the anonymous reviewers for their valuable 
suggestions.

\bibliographystyle{apacite}

\bibliography{ref}
\addcontentsline{toc}{section}{References}

\clearpage
	
\end{document}